\newtheorem{thrm}{Theorem}
\newtheorem{lma}{Lemma}
\newtheorem{prop}{Proposition}
\title{A Difficulty in Controlling Blockchain Mining Costs via Cryptopuzzle Difficulty}
\author{Venkata Sriram Siddhardh Nadendla}
\affiliation{%
\department{Department of Computer Science}
\institution{Missouri University of Science and Technology}
\city{Rolla}
\state{MO} 
\postcode{65410}
}
\email{nadendla@mst.edu}
\author{Lav R. Varshney}
\affiliation{%
\department{Department of Electrical and Computer Engineering}
\institution{University of Illinois, Urbana-Champaign}
\city{Urbana}
\state{IL} 
\postcode{61801}
}
\email{varshney@illinois.edu}
\begin{abstract}
Blockchain systems often employ proof-of-work consensus protocols to validate and add transactions into hashchains. These protocols stimulate competition among miners in solving cryptopuzzles (e.g.\ SHA-256 hash computation in Bitcoin) in exchange for a monetary reward. Here, we model mining as an all-pay auction, where miners' computational efforts are interpreted as bids, and the allocation function is the probability of solving the cryptopuzzle in a single attempt with unit (normalized) computational capability. Such an allocation function captures how blockchain systems control the difficulty of the cryptopuzzle as a function of miners' computational abilities (bids). In an attempt to reduce mining costs, we investigate designing a mining auction mechanism which induces a logit equilibrium amongst the miners with choice distributions that are unilaterally decreasing with costs at each miner. We show it is impossible to design a lenient allocation function that does this. Specifically, we show that there exists no allocation function that discourages miners to bid higher costs at logit equilibrium, if the rate of change of difficulty with respect to each miner's cost is bounded by the inverse of the sum of costs at all the miners.
\end{abstract}
\begin{document}

\maketitle







\section{Motivation}

Permission-less blockchain systems including the Bitcoin cryptocurrency rely on proof-of-work consensus protocols that involve competitions among participants to solve difficult computational problems (\emph{cryptopuzzles}).  These participants, called \emph{miners}, are bounded by the costs of resources needed for computation, such as energy.  The consensus protocols, however, are subject to so-called forking attacks (51\% attacks), where a miner or pool of miners having a large fraction of the computation power in the system can asymptotically almost surely fork the blockchain to prevent new transactions from being verified, double-spend coins, or destroy the system via dramatic loss of confidence \cite{NarayananBFMG2016, EyalS2018}.  As such, an implicit assumption in ensuring the security of the distributed trust system is that there are a large number of independent miners with incentives to follow the protocol.  In current practice, though, a small number of participants perform the majority of mining, often concentrated in locales such as in China where energy costs are low \cite{ChowP2017}.

One can view mining as participating in an all-pay auction \cite{ArnostiW2018}, where the bidding strategy captures heterogeneity amongst miners due to non-identical computational abilities and diverse electricity costs at different geographic locations. Recall that in an all-pay auction, the bid is forfeited whether win or lose \cite{BayeVK1996}.  In Bitcoin, mining involves computing the SHA-256 hash function over and over as quickly as possible, and so the bid can be thought of as the hash rate; likewise in other blockchain systems.  Equal hash rates (bids) incur varying costs to different miners, depending on the basic cost of resources in different locales.  The Nash equilibrium strategies for all-pay auctions under complete information are such that only the two strongest players (lowest costs of bidding) should actively participate and all others should bid zero \cite{HillmanR1989}; this is exactly a concentration of participants.  On the other hand, there is over-participation in many practical settings of all-pay auctions where there are many more than two participants.  Several explanations for over-bidding behavior have been suggested in the literature, including bounded rationality and prospect-theoretic explanations  \cite{DechenauxKS2015}.   

Even with rational agents, overbidding behavior can emerge. In the context of crowdsourcing contests, previous work in designing auction systems has demonstrated that reducing information about competitors can increase participation  \cite{RanadeV2018,VarshneyRVG2011,BoudreauLL2011}. When players have incomplete information about other players' strengths, the Bayesian equilibrium strategies involve participation by more than two players \cite{NoussairS2006}.  Alternatively, when bids do not directly translate into winning or losing, but rather only into increased chances of winning or losing, quantal response equilibrium (QRE) strategies also promote greater participation than that of Nash equilibrium strategies \cite{AndersonGH1998}.  (Note that QRE is often used to model bounded rationality of human agents, but in blockchain mining, the auction itself has inherent uncertainty.)


Game theory has been used by several researchers in the design of secure blockchain systems \cite{ArnostiW2018, AzouviH2019, Budish2018, HubermanLM2019, LeonardosLP2019, LiuLWNWLK2019}, especially in the last year. Most of these efforts investigate various economic reasons behind the centralization of Bitcoin mining. For example,  Budish showed that the necessary conditions for miners to be at equilibrium are very expensive, which promotes miners to sabotage via pooling their resources  \cite{Budish2018}. On the other hand, Huberman \emph{et al.} have shown that both the block reward and the transaction fees in Bitcoin do not  reflect miners' preferences, causing temporal fluctuations in miners' investments  \cite{HubermanLM2019}. Another interesting perspective on the centralization of Bitcoin miners was given by Leonardos \emph{et al.}, where miners are assumed to play oceanic games, as opposed to non-cooperative games, which model interactions between small numbers of dominant players and large numbers of individually insignificant players, as in the case of Bitcoin mining  \cite{LeonardosLP2019}. It was shown that oceanic games in Bitcoin mining incentivize miners to join forces and form coalitions that increase the concentration of mining power. 

Like these efforts, our work also contributes further to the game theory of blockchain systems. Specifically, we model blockchain mining as an all-pay auction to design cryptopuzzles that discourage miners to adopt higher computational costs at logit equilibrium (QRE with logit responses) with all miners actively participating. We show that it is not possible to design such a trustworthy distributed protocol, if the blockchain system does not react sharply to the increasing miner costs. 

\section{Modeling Blockchain Mining as All-Pay Auctions}

Let $\mathcal{M} = \{ 1, \ldots, N \}$ denote the set of $N$ blockchain miners, who compete against each other in solving a given cryptographic puzzle (\emph{i.e.} computing a target hash) and win a prize of value $A > 0$. During this competition, each miner makes multiple attempts sequentially to solve the crypto-puzzle. Let the outcome of the $k$th attempt made by the $i$th miner be denoted as $a_{i,k} \in \{0,1\}$, where $a_{i,k} = 1$ denotes the puzzle being solved successfully. Since a crypto-puzzle can only be solved using random guesses, it is natural to model the outcome of the $i$th miner at time $k$, i.e.\ $a_{i,k}$, as a Bernoulli random variable with probability $P(a_{i,k} = 1) = p_i$. Note that this probability $p_i$ characterizes the difficulty-level of the crypto-puzzle at the $i$th agent, since smaller values of $p_i$ needs several Bernoulli trials to obtain the outcome of $a_{i,k} = 1$. Note that modeling blockchain mining as a sequence of Bernoulli trials is not new. For example, Bagaria \emph{et al.} have modeled Bitcoin mining as a Poisson process \cite{BagariaKTFV2018}.

In this paper, we assume that each player employs a different hash rate in computing the hash function in the crypto-puzzle. Let $K$ denote the total number of random guesses after which one of the miners solves the cryto-puzzle successfully. Then, the $i$th miner wins the prize $A$, if
\begin{equation}
\displaystyle \sum_{k = 1}^{K} a_{i,k} = 1.
\end{equation}

In practical settings, mining agents have non-identical computational capabilities. For example, a miner with larger computational resources can complete the task in less effort per attempt (e.g. average run-time to execute a pseudorandom generator), as opposed to a less resourceful miner who needs more effort per attempt to complete the same task. This effort cost could be based on the cost of energy or specialized hardware availability \cite{VilimDK2016}. We model this miner heterogeneity (in terms of computational abilities and/or geo-location based disparities in electricity prices) using a non-negative cost-bid $c_i \in \mathbb{R_+}$ per attempt at the $i$th miner, for all $i = 1, \ldots, n$. Furthermore, if we assume that the joint belief about the other agents' cost-bids are denoted as $\pi(\boldsymbol{c}_{-i})$, the probability with which the $i$th miner solves the puzzle before other miners is given by
\begin{equation}
\begin{array}{lcl}
Q_i(c_i) & = & \displaystyle \mathbb{E}_{\pi} \left[ P \left(  \left. \displaystyle \sum_{k = 1}^{K} a_{i,k} = 1, \displaystyle \sum_{k = 1}^{K} a_{{-i},k} = 0 \ \right| \ c_i, \boldsymbol{c}_{-i} \right) \right],
\\[4ex]
& = & \displaystyle \int_{\mathbb{R}^{N-1}} p_i (1-p_i)^{K-1} \cdot \prod_{j \neq i} \left[ (1 - p_j)^K \right] \cdot \pi(\boldsymbol{c}_{-i}) \  d\boldsymbol{c}_{-i}.
\end{array}
\label{Eqn: Player Expected Utility}
\end{equation}

Then, the expected utility of the $i$th miner choosing a cost-bid $c_i$ is given by
\begin{equation}
\begin{array}{lcl}
U_i(c_i) & = & \displaystyle A \cdot Q_i(c_i) - K \cdot c_i,
\end{array}
\label{Eqn: Player Expected Utility}
\end{equation}
where the first term represents the average reward obtained by the $i$th miner, and the second term represents the total effort invested by the $i$th miner over $K$ attempts. Since the competition ends whenever a miner finds the target hash within the given crypto-puzzle, the \emph{individual rationality} of each miner is satiated only when $U_i \geq 0$ for all $i = 1, \ldots, n$. 



Furthermore, since blockchain is known to automatically choose the difficulty of the crypto-puzzle depending on miners' ability profile $\boldsymbol{c} = \{ c_1, \ldots, c_N \}$, we denote this allocation as a probability $f(\boldsymbol{c})$ with which the puzzle can be solved in one attempt per unit cost (computational ability). Therefore, in the presence of multiple agents with a cost profile $\boldsymbol{c} = \{ c_1, \ldots, c_N \}$, we can compute the Bernoulli probability 
\begin{equation}
p_i = \displaystyle f(\boldsymbol{c}) \cdot \frac{c_i}{\displaystyle \sum_{j \in \mathcal{M}} c_j}.
\end{equation}

In other words, the strategies available at the blockchain system (auctioneer) is to generate an appropriate crypto-puzzle via choosing a difficulty-level that specifies $p_i$ at its miners accordingly. On the other hand, the miners' strategies include choosing effort-costs, which are revealed to the blockchain system. Therefore, it is natural to model this interaction between the blockchain system and its miners as a \emph{mining auction}, where the miners' effort-costs are their bids and the blockchain system (auctioneer) allocates the prize $A$ to the miner who wins the crypto-competition whose difficulty $p_i$ is specified based on miners' bids.

Given such an auction, our goal is to investigate the equilibrium of this mechanism. In a traditional game-theoretic setting, the equilibrium of the mechanism is defined as a strategy profile where all the miners employ best responses to all the other miners' responses. In other words, for any $i \in \mathcal{M}$, given a bid-profile $\boldsymbol{c}_{-i}$ from all the other players, the best response employed by the $i$th miner at Nash equilibrium satisfies the following conditions:
\begin{equation}
U_i(c_i, \boldsymbol{c}_{-i}) \ \geq \ U_i(c_i', \boldsymbol{c}_{-i}), \text{ for all } c_i' \in \mathbb{R}, \text{ for all } i \in \mathcal{M}.
\label{Eqn: NE}
\end{equation}

Although Blockchain system usually reveals its allocation function publicly to its miners (e.g.\ Bitcoin), agents may not know\footnote{Although it is possible to estimate miners' bids from historical interactions, it is impossible to know if other miners have updated their computational capabilities in this auction round.} the type of other players since miners (or miner pools) may not necessarily reveal their bids to other agents. Consequently, the miners can potentially violate their \emph{individual rationality} conditions and not necessarily follow Nash equilibrium stated in Equation \eqref{Eqn: NE}. As noted previously, similar behavior is also observed in several auction settings where human agents over-dissipate their bids and seemingly violate their \emph{individual rationality} due to incomplete information \cite{RanadeV2018}. An alternate  method to account for the overdissipation of bids (efforts) is to justify decision errors using random utility models at the players \cite{AndersonGH1998}. More specifically, the uncertainty in the utility of $M_i$ in Equation \eqref{Eqn: Player Expected Utility} comes from the lack of knowledge of $K$ in advance and is fundamental to the blockchain setting, rather than a manifestation of bounded rationality. Therefore, in this paper, we assume that the miners choose strategies so that the mining auction converges to quantal response equilibrium (QRE), as opposed to NE. 


\section{Designing Mining Auctions with Quantal Responses}

Quantal responses are stochastic best responses, where agents choose choices with higher expected utilities with higher probabilities. These stochastic best responses are rationalized by the presence of random utilities, which are traditionally studied in discrete choice models (e.g. logit model). An equilibrium concept using quantal responses is called \emph{quantal response equilibria} (QRE), and was first proposed by McKelvey and Palfrey for normal-form games  \cite{McKelveyP1995}. More specifically, the utility functions of the agents are modeled via logit probabilistic rule, which results in a \emph{logit equilibrium}.
Although logit models are originally proposed for discrete choice settings, Anderson \emph{et al.} have demonstrated how similar equilibrium analysis can be performed when agent's utilities follow a continuous logit model \cite{AndersonGH1998}, as shown below.
\begin{equation}
\begin{array}{lclcl}
\pi_i(c_i) & = & \delta_i \exp\left( \displaystyle \frac{U_i(c_i)}{\mu} \right) & = & \delta_i \exp\left( \displaystyle \frac{ \displaystyle A \cdot Q_i(c_i) - K \cdot c_i}{\mu} \right)
\end{array}
\label{Eqn: QRE-cont}
\end{equation}
for all $i = 1, \ldots, N$, where $\pi_i(c_i)$ is the \emph{allocation function} which denotes the probability of $i^{th}$ miner solving the cryptopuzzle before any other agent, $U_i$ is the expected utility at the $i$th agent as given in Equation \eqref{Eqn: Player Expected Utility}, $\mu$ is the error parameter, and $\delta_i$ is a constant that ensures that the density integrates to one. Obviously, when $c_i = 0$, we have $p_i = 0$. Therefore, $\delta_i = \pi_i(c_i = 0)$. 

In typical Blockchain systems, miners typically join together as mining pools to gather large amounts of computational resources, which results in a large computational cost $c_i$ at the $i$th miner. Therefore, our goal is to investigate how the allocation function $\pi_i(c_i)$ change with the computational cost $c_i$, at logit equilibrium. In this regard, we present the necessary condition for the $i$th miner to be discouraged to have a large $c_i$ in the following proposition. 



\begin{prop}
A mining auction discourages its miners to adopt higher computational capabilities if 
$$\displaystyle \frac{\partial q_i(c_i, \boldsymbol{c}_{-i})}{\partial c_i} \leq \frac{K}{A}$$ 
holds true for all $i \in \mathcal{M}$.
\label{prop: discourage}
\end{prop}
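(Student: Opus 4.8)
The plan is to read ``discourages its miners to adopt higher computational capabilities'' as the statement that the logit choice density $\pi_i(c_i)$ in Equation~\eqref{Eqn: QRE-cont} is non-increasing in the cost-bid $c_i$; that is, a higher cost is assigned no larger probability mass at equilibrium (matching the phrase ``choice distributions that are unilaterally decreasing with costs'' from the abstract). Formally, I would take the claim to be that $\partial \pi_i(c_i)/\partial c_i \leq 0$ for every $i \in \mathcal{M}$, and show that the stated bound on $\partial q_i/\partial c_i$ is sufficient for this.

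First I would differentiate the logit expression directly. Writing $\pi_i(c_i) = \delta_i \exp\!\big( U_i(c_i)/\mu \big)$ with $U_i(c_i) = A \cdot Q_i(c_i) - K \cdot c_i$, the chain rule gives
$$\frac{\partial \pi_i(c_i)}{\partial c_i} = \frac{\delta_i}{\mu}\, \exp\!\left( \frac{U_i(c_i)}{\mu} \right) \left( A \cdot \frac{\partial Q_i(c_i)}{\partial c_i} - K \right).$$
Since $\delta_i > 0$, $\mu > 0$, and the exponential factor is strictly positive, the prefactor carries a fixed positive sign and the monotonicity of $\pi_i$ is controlled entirely by the bracketed term. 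Hence $\partial \pi_i/\partial c_i \leq 0$ if and only if $\partial Q_i(c_i)/\partial c_i \leq K/A$, which reduces the proposition to a bound on the derivative of the win probability.

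The remaining step is to pass from the aggregate $Q_i$ to the pointwise quantity $q_i(c_i, \boldsymbol{c}_{-i})$ appearing in the hypothesis. Because $Q_i(c_i) = \mathbb{E}_\pi\big[ q_i(c_i, \boldsymbol{c}_{-i}) \big]$ is the expectation of the conditional win probability $q_i(c_i, \boldsymbol{c}_{-i}) = p_i (1-p_i)^{K-1} \prod_{j \neq i}(1-p_j)^K$ over the belief $\pi(\boldsymbol{c}_{-i})$, and since this belief does not depend on $c_i$, I would interchange differentiation and integration to obtain $\partial Q_i/\partial c_i = \mathbb{E}_\pi[\partial q_i/\partial c_i]$. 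The hypothesis $\partial q_i/\partial c_i \leq K/A$ holds for every realization of $\boldsymbol{c}_{-i}$, so monotonicity of the expectation yields $\partial Q_i/\partial c_i \leq K/A$, closing the argument via the equivalence established above.

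The main obstacle I anticipate is technical rather than conceptual: justifying the interchange of the derivative and the expectation. One must verify a dominated-convergence-type condition, namely that $\partial q_i/\partial c_i$ is integrable uniformly in a neighborhood of $c_i$ against $\pi(\boldsymbol{c}_{-i})$; the compact form of $q_i$ as a product of bounded factors should make this routine, since $p_i$ and each $1-p_j$ lie in $[0,1]$ and the dependence of $p_i$ on $c_i$ through $p_i = f(\boldsymbol{c}) c_i / \sum_j c_j$ is smooth wherever $f$ is. A secondary point to handle carefully is that \emph{every} $p_j$, not just $p_i$, depends on $c_i$ through the shared difficulty $f(\boldsymbol{c})$ and the normalizing sum $\sum_j c_j$; but because the proposition only requires the sign of the aggregate derivative and supplies the bound on $\partial q_i/\partial c_i$ as a hypothesis, these cross-dependencies need not be expanded explicitly.
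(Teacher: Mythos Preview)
Your proposal is correct and follows essentially the same route as the paper: differentiate the logit density to reduce the monotonicity condition to $\partial Q_i/\partial c_i \leq K/A$, then pass from $Q_i$ to the pointwise $q_i$ by interchanging differentiation and expectation and using that a pointwise bound implies the same bound in expectation. If anything, you are more careful than the paper, which simply asserts the interchange and the sign of the prefactor without comment.
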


\begin{proof}
Note that, in order to demotivate miners to accumulate higher computational capabilities, we desire $\pi_i(c_i)$ to be a decreasing function of $c_i$. This can happen only when
\begin{equation}
\begin{array}{c}
\displaystyle \frac{\partial \pi_i(c_i)}{\partial c_i} = \displaystyle \frac{\pi_i(c_i)}{\mu} \left( \displaystyle A \cdot \frac{\partial Q_i(c_i)}{\partial c_i} - K \right) \leq 0.
\end{array}
\label{Eqn: QRE-cont2}
\end{equation}
In other words, if $q_i(c_i, \boldsymbol{c}_{-i}) = \displaystyle p_i (1-p_i)^{K-1} \cdot \prod_{j \neq i} \left[ (1 - p_j)^K \right]$, an idealistic mining auction satisfies the condition 
\begin{equation}
\displaystyle \frac{\partial Q_i(c_i)}{\partial c_i} = \displaystyle \int_{\mathbb{R}^{N-1}} \frac{\partial q_i(c_i, \boldsymbol{c}_{-i})}{\partial c_i} \ \pi_{-i}(\boldsymbol{c}_{-i}) \ d \boldsymbol{c}_{-i} \ \leq \ \frac{K}{A},
\label{Eqn: Ideal-mining-auction-QRE}
\end{equation}
whenever agents employ quantal responses as opposed to fixed best responses. Note that, if \[
\frac{\partial q_i(c_i, \boldsymbol{c}_{-i})}{\partial c_i} \leq \frac{K}{A}
\]
holds true, the inequality in Equation \eqref{Eqn: Ideal-mining-auction-QRE} holds true as well. 
\end{proof}

In the remainder of this paper, our goal is to identify a mining auction (i.e.\ an appropriate allocation function $f(\boldsymbol{c})$) that satisfies the condition presented in Proposition \ref{prop: discourage}. In this journey, we rely on some minor results, which are first stated as lemmas.

\begin{lma}
If $f(c_i, \boldsymbol{c}_{-i})$ is an increasing function of $c_i$, $p_i$ is increasing in $c_i$ for a fixed profile $\boldsymbol{c}_{-i}$. Furthermore, if $f(c_i, \boldsymbol{c}_{-i})$ is $\displaystyle \left( \frac{1}{\displaystyle \sum_{m \in \mathcal{M}} c_m} \right)$-Lipschitz in $c_i$, then $p_i$ is $\displaystyle \left( \frac{1}{\displaystyle \sum_{m \in \mathcal{M}} c_m} \right)$-Lipschitz in $c_i$ for all $i \in \mathcal{M}$.
\label{Lemma: p_i-vs-c_i}
\end{lma}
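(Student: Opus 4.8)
The plan is to work directly with the closed form $p_i = f(\boldsymbol{c}) \cdot \frac{c_i}{\sum_{m \in \mathcal{M}} c_m}$ and differentiate it with respect to $c_i$, treating $\boldsymbol{c}_{-i}$ as fixed. Writing $S = \sum_{m \in \mathcal{M}} c_m = c_i + S_{-i}$ with $S_{-i} = \sum_{j \neq i} c_j$, the product rule gives
\[
\frac{\partial p_i}{\partial c_i} = \frac{\partial f}{\partial c_i} \cdot \frac{c_i}{S} + f \cdot \frac{S_{-i}}{S^2},
\]
since $\frac{\partial}{\partial c_i}\left(\frac{c_i}{c_i + S_{-i}}\right) = \frac{S_{-i}}{S^2}$. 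Both parts of the lemma then follow by inspecting this single expression, so the work reduces to reading off sign and magnitude information from its two summands.

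For the monotonicity claim I would observe that $f \geq 0$ (it is a probability), $c_i \geq 0$, $S > 0$, and $S_{-i} \geq 0$, so the second term is always non-negative. If in addition $f$ is increasing in $c_i$, i.e.\ $\frac{\partial f}{\partial c_i} \geq 0$, then the first term is non-negative as well, whence $\frac{\partial p_i}{\partial c_i} \geq 0$ and $p_i$ is increasing in $c_i$ for the fixed profile $\boldsymbol{c}_{-i}$.

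For the Lipschitz claim I would apply the triangle inequality to the same expression and use the two bounds $\left|\frac{\partial f}{\partial c_i}\right| \leq \frac{1}{S}$ (the hypothesized Lipschitz modulus) and $0 \leq f \leq 1$ (since $f$ is itself a probability). This gives
\[
\left|\frac{\partial p_i}{\partial c_i}\right| \leq \frac{1}{S}\cdot\frac{c_i}{S} + 1 \cdot \frac{S_{-i}}{S^2} = \frac{c_i + S_{-i}}{S^2} = \frac{1}{S},
\]
which is exactly the desired $\left( \frac{1}{\sum_{m} c_m} \right)$-Lipschitz bound on $p_i$.

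The calculations here are routine; the point requiring care is the interpretation of the Lipschitz condition, since the stated constant $\frac{1}{S}$ itself depends on $c_i$ through $S$. I would read it as a pointwise bound on $\left|\frac{\partial f}{\partial c_i}\right|$ at each cost profile (equivalently, a local Lipschitz modulus), which is what makes the telescoping $\frac{c_i}{S^2} + \frac{S_{-i}}{S^2} = \frac{1}{S}$ meaningful. The essential structural input beyond the hypothesis is simply that $f \leq 1$, which is what keeps the second summand from exceeding $\frac{S_{-i}}{S^2}$; without the probability interpretation of $f$ the bound would not close.
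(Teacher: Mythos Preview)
Your proof is correct and follows essentially the same approach as the paper: both compute the product-rule derivative $\frac{\partial p_i}{\partial c_i} = \frac{\partial f}{\partial c_i}\cdot\frac{c_i}{S} + f\cdot\frac{S_{-i}}{S^2}$, read off monotonicity from the sign of each term, and then bound the derivative by substituting $\left|\frac{\partial f}{\partial c_i}\right|\le 1/S$ and $f\le 1$ to telescope to $1/S$. Your version is slightly more explicit in invoking $f\le 1$ and in flagging the pointwise interpretation of the Lipschitz constant, but the argument is the same.
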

\begin{proof}
We compute the partial derivative of $p_i$ with respect to $c_i$ as shown below.
\begin{equation}
\displaystyle \frac{\partial p_i}{\partial c_i} = \displaystyle \frac{\partial f}{\partial c_i} \cdot \frac{c_i}{\displaystyle \sum_{j \in \mathcal{M}} c_j} + f \cdot \frac{ \displaystyle \sum_{j \neq i} c_j}{\displaystyle \left( \sum_{j \in \mathcal{M}} c_j \right)^2}.
\label{Eqn: QRE-cont2}
\end{equation}

Note that the right side is always non-negative, as long as $\displaystyle \frac{\partial f}{\partial c_i}$ is non-negative.

Now, if $f$ is $\displaystyle \left( \frac{1}{\displaystyle \sum_{m \in \mathcal{M}} c_m} \right)$-Lipschitz in $c_i$ for all $i \in \mathcal{M}$, we have
\begin{equation}
\begin{array}{lclcl}
\displaystyle \frac{\partial p_i}{\partial c_i} & \leq & \displaystyle \frac{c_i}{\displaystyle \left( \sum_{j \in \mathcal{M}} c_j \right)^2} + f \cdot \frac{ \displaystyle \sum_{j \neq i} c_j}{\displaystyle \left( \sum_{j \in \mathcal{M}} c_j \right)^2} & \leq & \displaystyle \frac{1}{\displaystyle \sum_{j \in \mathcal{M}} c_j}.
\end{array}
\label{Eqn: QRE-cont2b}
\end{equation}

\end{proof}

\begin{lma}
If $f(c_i, \boldsymbol{c}_{-i})$ is increasing in $c_i$ for all $i \in \mathcal{M}$, then we have
\begin{equation}
\displaystyle \frac{\partial p_j}{\partial c_i} \ \geq \ \displaystyle \frac{ - c_i}{\displaystyle \left( \sum_{j \in \mathcal{M}} c_j \right)^2}.
\label{Eqn: QRE-cont4}
\end{equation}
\label{Lemma: p_j-vs-c_i}
\end{lma}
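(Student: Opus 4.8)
The plan is to prove the bound by differentiating the closed form of $p_j$ directly and then splitting the result into a provably non-negative piece and a single negative piece that can be controlled. Writing $S = \sum_{m \in \mathcal{M}} c_m$, we have $p_j = f(\boldsymbol{c})\, c_j / S$, so for a rival index $i \neq j$ the product-and-quotient rule gives
\[
\frac{\partial p_j}{\partial c_i} \;=\; \frac{\partial f}{\partial c_i} \cdot \frac{c_j}{S} \;-\; f \cdot \frac{c_j}{S^2},
\]
where the first term records how sharpening the allocation raises $p_j$, and the second is the ``dilution'' term incurred because raising $c_i$ enlarges the denominator $S$. This closed form is the workhorse of the argument.

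The key step is to bound the two terms separately. Because $f$ is assumed increasing in $c_i$ we have $\partial f/\partial c_i \ge 0$, and since $c_j, S \ge 0$ the entire first term is non-negative and may be discarded when seeking a lower bound. For the second term I would use that $f$ is a single-attempt success probability, hence $0 \le f \le 1$, so that $f\, c_j/S^2 \le c_j/S^2$. Combining these gives the lower bound on the cross-partial, with the numerator carried by the index that labels the differentiated probability.

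To land on the displayed inequality, whose numerator carries $c_i$, I would exploit that the claim is asserted for \emph{all} miners in $\mathcal{M}$, so the two labels are interchangeable dummy indices. Applying the identical computation to the companion cross-partial, in which the perturbed probability and the perturbing cost exchange roles, yields $\partial p_i/\partial c_j = (\partial f/\partial c_j)(c_i/S) - f\, c_i/S^2 \ge -\,c_i/S^2$, which is exactly the stated bound once the labels $i$ and $j$ are swapped. I would make this relabeling explicit so that the index appearing in the numerator matches the $c_i$ in the statement.

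The main obstacle is twofold. First, the substantive assumption $f \le 1$ must be justified cleanly from the interpretation of $f$ as the per-unit-cost single-attempt success probability (equivalently from requiring $p_m = f\, c_m/S \in [0,1]$ together with $\sum_{m} p_m = f$); without some such uniform ceiling on $f$, the dilution term cannot be bounded. Second, the index bookkeeping is delicate: the numerator of the dilution term is genuinely the subscript of the differentiated probability, so reconciling it with the $c_i$ written in the display requires the symmetric relabeling above rather than a per-pair identity. Once these two points are settled, the remainder is a one-line differentiation and a sign check.
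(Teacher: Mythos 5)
Your differentiation is the correct one --- and, notably, it is \emph{more} correct than the paper's own proof, which contains an index slip at exactly this point: the paper writes
\begin{equation*}
\frac{\partial p_j}{\partial c_i} \;=\; \frac{\partial f}{\partial c_i}\cdot\frac{c_i}{S} \;-\; f\cdot\frac{c_i}{S^2},
\end{equation*}
with $c_i$ in both numerators where $c_j$ belongs, and the lemma's stated bound $-c_i/S^2$ is inherited from that slip. From your (correct) formula you legitimately obtain $\partial p_j/\partial c_i \ge -c_j/S^2$ using $\partial f/\partial c_i \ge 0$ and $f \le 1$ (the ceiling $f\le 1$ is also what the paper invokes, since $f$ is a probability, so that part of your worry is unproblematic). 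The genuine gap is your final relabeling step. Renaming the dummy indices in your companion identity $\partial p_i/\partial c_j \ge -c_i/S^2$ simply returns $\partial p_j/\partial c_i \ge -c_j/S^2$: under any renaming the surviving numerator travels with the index of the \emph{differentiated} probability, never with the perturbing cost. The statement as printed --- $\partial p_j/\partial c_i \ge -c_i/S^2$ for all ordered pairs $j \neq i$ --- is strictly stronger than what you proved, and no symmetry argument can bridge the two, because the stronger statement is false.

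Concretely, take $N = 2$ and $f(\boldsymbol{c}) = S/(1+S)$ with $S = c_1 + c_2$, which is increasing in each cost and lies in $(0,1)$. Then $p_2 = c_2/(1+S)$, so $\partial p_2/\partial c_1 = -c_2/(1+S)^2$; at $c_1 = 0.01$, $c_2 = 10$ this is about $-0.08$, far below the claimed lower bound $-c_1/S^2 \approx -10^{-4}$. So the lemma as stated cannot be proved; the provable version carries $c_j$ in the numerator, which is exactly where your computation stalled. Your closing remark that the ``index bookkeeping is delicate'' was the right instinct, but the honest resolution is to restate the lemma with $-c_j/S^2$, not to attempt the swap. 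Note this correction propagates into the main theorem: there the paper uses the $c_i$ form to bound $-K\sum_{j\neq i}\frac{1}{1-p_j}\frac{\partial p_j}{\partial c_i}$ by $K\,\frac{c_i (N-1)}{S^2 (1-f)}$, whereas the corrected bound gives $K\sum_{j\neq i}\frac{c_j}{S^2(1-p_j)} \le \frac{K}{(1-f)\,S}$ via $\sum_{j\neq i} c_j \le S$, changing the subsequent algebra. In short: your first three steps are right (and repair the paper's computation), the relabeling step is invalid, and the target inequality itself is flawed as printed.
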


\begin{proof}
We compute the partial derivative of $p_j$ with respect to $c_i$ for any $j \neq i$, as shown below.
\begin{equation}
\displaystyle \frac{\partial p_j}{\partial c_i} = \displaystyle \frac{\partial f}{\partial c_i} \cdot \frac{c_i}{\displaystyle \sum_{j \in \mathcal{M}} c_j} - f \cdot \frac{ \displaystyle c_i}{\displaystyle \left( \sum_{j \in \mathcal{M}} c_j \right)^2}.
\label{Eqn: QRE-cont3}
\end{equation}

If $f(c_i, \boldsymbol{c}_{-i})$ is increasing in $c_i$ and since $f \leq 1$, then we have Equation \eqref{Eqn: QRE-cont4}.
\end{proof}

Next, we state the main result in this paper in the following theorem.

\begin{thrm}
There does not exist a $\left( \displaystyle \frac{1}{\displaystyle \sum_{m \in \mathcal{M}} c_m} \right)$-Lipschitz allocation function $f(c_i, \boldsymbol{c}_{-i})$ that increases unilaterally with $c_i$ for all $i \in \mathcal{M}$, which discourages miners to bid higher costs at logit equilibrium.
\end{thrm}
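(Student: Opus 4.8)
The plan is to argue by contradiction. Suppose some increasing, $\left(\frac{1}{\sum_{m\in\mathcal{M}}c_m}\right)$-Lipschitz allocation $f$ did discourage its miners. By Proposition~\ref{prop: discourage} together with the logit form in Equation~\eqref{Eqn: QRE-cont}, this amounts to meeting the design target $\frac{\partial q_i}{\partial c_i}\le\frac{K}{A}$ for every $i\in\mathcal{M}$ and \emph{every} cost profile $\boldsymbol{c}$. I would then produce a cost profile at which a lower bound on $\frac{\partial q_i}{\partial c_i}$, assembled from Lemmas~\ref{Lemma: p_i-vs-c_i} and~\ref{Lemma: p_j-vs-c_i}, is forced above the fixed constant $\frac{K}{A}$, contradicting the target.

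First I would differentiate $q_i=p_i(1-p_i)^{K-1}\prod_{j\neq i}(1-p_j)^K$ logarithmically, which gives
\[
\frac{1}{q_i}\frac{\partial q_i}{\partial c_i}=\frac{\partial p_i}{\partial c_i}\,\frac{1-Kp_i}{p_i(1-p_i)}-K\sum_{j\neq i}\frac{1}{1-p_j}\frac{\partial p_j}{\partial c_i},
\]
separating the \emph{self} contribution (first term, from $p_i(1-p_i)^{K-1}$) from the \emph{cross} contribution (second term, from the factors $(1-p_j)^K$, $j\neq i$). For the self term I would use the non-negative lower bound $\frac{\partial p_i}{\partial c_i}\ge f\cdot\frac{\sum_{j\neq i}c_j}{(\sum_m c_m)^2}$ that falls directly out of the derivative computed in the proof of Lemma~\ref{Lemma: p_i-vs-c_i} (discarding the non-negative $\partial f/\partial c_i$ term). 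For the cross term I would invoke Lemma~\ref{Lemma: p_j-vs-c_i} to keep each $\frac{\partial p_j}{\partial c_i}$ from being too negative, and the Lipschitz hypothesis together with $f\le 1$ to cap it from above by $\frac{c_j}{(\sum_m c_m)^2}$, so the cross sum is controlled by an $O(1)$ quantity.

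The decisive step is the choice of profile. Sending all costs to zero while keeping $c_i$ negligible relative to $\sum_{j\neq i}c_j$ (so that $p_i\to 0$ and hence $1-Kp_i>0$), the self term behaves like $\frac{f}{\sum_m c_m}\to\infty$, whereas the cross contribution, lacking the $1/p_i$ enhancement carried by the self term, is smaller by a factor of order $p_i\to 0$ and therefore stays bounded. Consequently $\frac{\partial q_i}{\partial c_i}\to\infty$, which for any fixed prize $A$ and horizon $K$ eventually exceeds $\frac{K}{A}$, delivering the contradiction.

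The hard part will be the sign of $(1-Kp_i)$, which is not constant over the cost space, and ruling out any cancellation of the self-term blow-up by the cross term. This is precisely where leniency enters: the $\left(\frac{1}{\sum_m c_m}\right)$-Lipschitz bound caps $\partial f/\partial c_i$ and, through Lemma~\ref{Lemma: p_j-vs-c_i}, prevents the cross derivatives from being large and positive enough to overturn the self term. I would therefore confine the argument to the regime $p_i<1/K$ carved out by the profile above, where the self term is positive, and verify there that the Lipschitz cap forces the net derivative upward rather than downward.
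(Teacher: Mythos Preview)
Your route diverges from the paper's. After the same logarithmic differentiation, the paper applies Lemmas~\ref{Lemma: p_i-vs-c_i} and~\ref{Lemma: p_j-vs-c_i} to produce an \emph{upper} bound on $\frac{1}{q_i}\frac{\partial q_i}{\partial c_i}$, then argues that forcing this upper bound below $K/A$ compels $f$ to satisfy the quadratic
\[
f^{2}+\Bigl[\tfrac{A(N-1)}{c_{tot.}^{2}}-1\Bigr]f+\tfrac{A}{K\,c_{\min}}\le 0,
\]
and completes the square to declare this infeasible. No limit is taken; the contradiction is purely algebraic at a fixed profile. You instead assemble \emph{lower} bounds and send $\boldsymbol{c}\to 0$ to push $\frac{\partial q_i}{\partial c_i}$ above $K/A$. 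Your direction---a lower bound exhibiting a violation---is the logically natural one for an impossibility result, whereas the paper's argument has the structure ``a sufficient condition for discouragement fails,'' which is weaker than what the theorem claims.

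That said, your limiting step has a genuine gap. You assert the self contribution ``behaves like $f/\sum_m c_m\to\infty$,'' which tacitly assumes $f$ does not vanish as fast as $\sum_m c_m$ near the origin. The hypotheses do not force this. Take $f(\boldsymbol{c})=\min\bigl(1,\sum_{m}c_m\bigr)$: it is unilaterally increasing, and for $\sum_m c_m\le 1$ one has $\partial f/\partial c_i=1\le 1/\sum_m c_m$, so the Lipschitz condition holds. Then $p_i=c_i$, every cross derivative $\partial p_j/\partial c_i$ vanishes, and
\[
\frac{\partial q_i}{\partial c_i}=(1-c_i)^{K-2}(1-Kc_i)\prod_{j\ne i}(1-c_j)^{K}\ \longrightarrow\ 1\qquad\text{as }\boldsymbol{c}\to 0,
\]
so no blow-up occurs and your chosen regime fails to deliver the contradiction for this admissible $f$. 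The $1/p_i$ enhancement you rely on is exactly cancelled by the factor $q_i\sim p_i$ you must reinstate when passing from the logarithmic derivative back to $\partial q_i/\partial c_i$. To repair the argument you would need either an additional hypothesis keeping $f$ bounded away from zero, or a non-asymptotic lower bound valid at a fixed cost profile---which pushes you back toward the paper's algebraic route.
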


\begin{proof}
In the following, we compute the partial derivative of $g_i = \log q_i$ with respect to $c_i$:
\begin{equation}
\begin{array}{lcl}
\displaystyle \frac{1}{g_i} \cdot \frac{\partial g_i}{\partial c_i} & = & \displaystyle \left[ \frac{1}{p_i} - \frac{K-1}{1 - p_i} \right] \frac{\partial p_i}{\partial c_i} - K \cdot \sum_{j \neq i} \left( \frac{1}{1 - p_j} \right) \frac{\partial p_j}{\partial c_i}
\\[6ex]
& \leq & \displaystyle \frac{1 - K p_i}{p_i (1 - p_i)} \cdot \frac{1}{ \displaystyle \sum_{m \in \mathcal{M}} c_m} \ + \ K \cdot \frac{c_i}{\left( \displaystyle \sum_{m \in \mathcal{M}} c_m \right)^2} \cdot \sum_{j \neq i} \frac{1}{1 - p_j}
\end{array}
\end{equation}

Since $g_i(\boldsymbol{c}) \leq 1$ and $0 \leq p_i \leq f$, we have
\begin{equation}
\begin{array}{lcl}
\displaystyle \frac{\partial g_i}{\partial c_i} & \leq & \displaystyle \frac{1}{c_i f (1 - f)} \ + \ K \cdot \frac{c_i}{\left( \displaystyle \sum_{m \in \mathcal{M}} c_m \right)^2} \cdot \frac{N-1}{1 - f}
\\[4ex]
& = & \displaystyle \frac{c_{tot.}^2 + K c_i f (N-1)}{c_i f (1 - f) c_{tot.}^2 }
\end{array}
\end{equation}
where $c_{tot.} = \displaystyle \sum_{m \in \mathcal{M}} c_m$.

From Proposition 2, the allocation function $f$ demotivates miners to adopt higher computational capabilities if
\begin{equation}
\begin{array}{lcl}
\displaystyle \frac{c_{tot.}^2 + K c_i f (N-1)}{c_i f (1 - f) c_{tot.}^2 } & \leq & \displaystyle \frac{K}{A}.
\end{array}
\end{equation}

In other words, we expect $f$ to satisfy
\begin{equation}
\begin{array}{lcl}
\displaystyle f^2 + \left[ \frac{A (N-1)}{c_{tot.}^2} - 1 \right] f + \frac{A}{K c_i} & \leq & 0,
\end{array}
\end{equation}
for all $i \in \mathcal{M}$. 

In other words, if we denote $c_{min} = \displaystyle \min_{i \in \mathcal{M}} c_i$, then it is sufficient if $f$ satisfies
\begin{equation}
\begin{array}{lcl}
\displaystyle f^2 + \left[ \frac{A (N-1)}{c_{tot.}^2} - 1 \right] f + \frac{A}{K c_{min}} & \leq & 0.
\end{array}
\end{equation}

Note that the above equation can be equivalently written as
\begin{equation}
\begin{array}{lcl}
\displaystyle \left\{ f + \frac{1}{2}\left[ \frac{A (N-1)}{c_{tot.}^2} - 1 \right] \right\}^2 + \frac{A}{K c_{min}} & \leq & \displaystyle \frac{1}{4}\left[ \frac{A (N-1)}{c_{tot.}^2} - 1 \right]^2.
\end{array}
\end{equation}

This inequality cannot be achieved since the left side of the above inequality is always larger than the right side. 
\end{proof}

In other words, the theorem says that it is impossible to design a lenient allocation function for blockchain systems that discourages miners to adopt higher computational capabilities.
That is, from the perspective of logit equilibrium, blockchain systems need to take severe actions (in terms of controlling the mining difficulty) against its miners to discourage them towards lower costs.

\section{Conclusion and Future Work}

In this paper, we modeled mining in blockchain systems as an all-pay auction. Since many studies have shown that miners exhibit overbidding behavior, we investigated the problem of designing a mining auction mechanism which induces a logit equilibrium amongst miners. We found that the miners cannot be discouraged to bid higher costs at logit equilibrium, if the rate of change of allocation probability $f$, i.e. the probability with which the cryptopuzzle can be solved in one attempt per unit normalized-cost, with respect to each miner's cost is bounded by the inverse of the sum of costs at all the miners. In other words, it is necessary to punish the miners severely if they choose higher computational costs, in order to motivate a distributed trust system. In future work, we aim to identify allocation functions in blockchain systems whose rate of change is not upper-bounded by the inverse of the sum of costs at all the miners.  We will also investigate QRE formulations where there is also incomplete information on competitor strengths to see how the interaction between these two mechanism design strategies play out.

\bibliographystyle{ieeetr}
\bibliography{references}
\end{document}